\documentclass[aps,prd,twocolumn,superscriptaddress,nofootinbib,floatfix,showpacs]{revtex4-2}

\usepackage[utf8]{inputenc}
\usepackage[T1]{fontenc}
\usepackage{amsmath,amssymb,amsfonts,mathtools}
\usepackage{amsthm}
\usepackage{bm}
\usepackage{graphicx}
\graphicspath{{figures/}}
\usepackage{xcolor}
\usepackage{hyperref}
\usepackage{orcidlink}

\hypersetup{colorlinks=true,
            linkcolor=blue!60!black,
            citecolor=blue!60!black,
            urlcolor=blue!60!black}
\providecommand{\orcidlink}[1]{%
  \href{https://orcid.org/#1}{\textsuperscript{\scriptsize ORCID}}}
\hypersetup{
  pdftitle={Bounded thermal weights from a discrete Boltzmann factor:
            spectral suppression, exact discrete Jarzynski identity,
            and the work-only obstruction},
  pdfauthor={Abdelmalek Boumali and Yassine Chargui}
}

\newtheorem{theorem}{Theorem}
\newtheorem{proposition}[theorem]{Proposition}
\newtheorem{lemma}[theorem]{Lemma}

\theoremstyle{remark}
\newtheorem*{remark}{Remark}

\newcommand{\be}{\begin{equation}}
\newcommand{\ee}{\end{equation}}
\newcommand{\bea}{\begin{align}}
\newcommand{\eea}{\end{align}}
\newcommand{\avg}[1]{\langle #1 \rangle}
\newcommand{\kB}{k_{\mathrm{B}}}
\newcommand{\Ep}{E_{\mathrm{P}}}
\newcommand{\lP}{\ell_{\mathrm{P}}}
\newcommand{\mP}{m_{\mathrm{P}}}
\newcommand{\Var}{\mathrm{Var}}
\newcommand{\dd}{\mathrm{d}}
\newcommand{\Wd}{\mathcal{W}_{d}}
\newcommand{\BE}{B_{E}}
\newcommand{\Zd}{Z_{d}}
\newcommand{\Fd}{F_{d}}
\newcommand{\Ud}{U_{d}}
\newcommand{\Ldisc}{L_{d}}
\newcommand{\LSB}{L_{\mathrm{SB}}}
\newcommand{\TBH}{T_{H}}
\newcommand{\bH}{\beta_{H}}
\newcommand{\nP}{n_{P}}

\begin{document}

\title{Bounded thermal weights from a discrete Boltzmann factor}

\author{Abdelmalek Boumali\,\orcidlink{0000-0003-2552-0427}}
\email{boumali.abdelmalek@univ-tebessa.dz}
\affiliation{Laboratory of Physics at Guelma (LPG),
             Echahid Cheikh Larbi Tebessi University,
             Tebessa 12000, Algeria}

\author{Yassine Chargui\,\orcidlink{0000-0001-9506-3159}}
\email{y.chergui@qu.edu.sa}
\affiliation{Department of Physics, College of Science, Qassim University,
             Buraydah 51452, Saudi Arabia}

\date{\today}

\begin{abstract}
We analyze the physical consequences of replacing the canonical
Boltzmann factor $e^{-\beta E}$ by the discrete weight
$\BE(\beta_{n})=(1-bE)^{n}$ defined on an inverse-temperature lattice
of spacing $b$. Positivity restricts the energy domain to $E<1/b$, so
the deformation endows the thermal weight with compact support;
this single feature organizes all subsequent results.
We obtain four main results.
\emph{(i)~Spectral.} An effective discrete Bose--Einstein occupation
law is derived, its regime of validity is bounded against the exact
truncated partition sum, and the relative error is shown to remain
below $2\%$ throughout the spectral region that dominates the
Stefan--Boltzmann integral.
\emph{(ii)~Hawking sector.} The leading correction to the
Schwarzschild luminosity reads
$\Ldisc/\LSB = 1 - (900\,\zeta(5)/\pi^{4})\,b\kB\TBH + \mathcal{O}(b^{2})$;
the Planck-scale calibration $b=8\pi/\Ep$ produces a Planckian endpoint
of the thermal description at $M_{\mathrm{cut}}=\mP$, which is not
interpreted as proof of an absolutely stable remnant.
\emph{(iii)~Nonequilibrium.} Defining a discrete work functional
$\Wd$ as a ratio of thermal weights, we prove an
\emph{exact} Jarzynski-type identity for deterministic
measure-preserving protocols. We further show that, at finite $b$, the
discrete Crooks ratio cannot be reduced to a function of work alone
because it retains explicit dependence on the initial trajectory
energy; the corresponding first-order Jensen inequality involves the
mixed moment $\avg{E_{i}W}$ rather than work cumulants.
\emph{(iv)~Entropy diagnostic.} A brick-wall computation produces a
logarithmic shift linear in $b$, whose normalization is explicitly
flagged as regulator-dependent.
The standard continuum results are recovered smoothly as $b\to 0$.
\end{abstract}

\maketitle

\section{Introduction}
\label{sec:intro}

The exponential Boltzmann factor $e^{-\beta E}$ is one of the central
mathematical structures of equilibrium statistical mechanics. It fixes
the canonical probability measure, determines the Planck distribution
of blackbody radiation, enters the thermal description of Hawking
emission, and provides the equilibrium kernel from which many
nonequilibrium fluctuation identities are constructed~\cite{huang1987,jarzynski1997,crooks1999,seifert2012,esposito2009}. For this reason,
any modification of the Boltzmann weight must be treated with
particular care. A change in the thermal factor is not a merely formal
replacement; it can modify spectral occupations, thermodynamic
response functions, high-energy emission rates, and the structure of
work relations. At the same time, the physical consequences of such a
modification should be separated from model-dependent assumptions, so
that the results that follow directly from the deformed weight are not
confused with interpretations that require an additional microscopic
theory.

Non-exponential thermal weights have been widely investigated in
systems where the assumptions leading to the Gibbs distribution are
modified or only effectively satisfied. Tsallis statistics and
superstatistics, for example, provide useful descriptions of systems
with long-range correlations, temperature fluctuations, memory effects,
or non-standard phase-space structures~\cite{tsallis1988,tsallis2009,beck2003,beck2001,wilk2000}. Such approaches have been applied to
gravitating systems~\cite{plastino1993}, high-energy particle spectra
\cite{bediaga2000,cms2010,alice2013,wong2013}, and dissipative
optical lattices~\cite{douglas2006}. Other departures from the
standard exponential law arise in contexts inspired by generalized
uncertainty principles~\cite{maggiore1993,kempf1995} and discrete-time
mechanics~\cite{lee1983,jaroszkiewicz1997}. These examples show that
the study of modified thermal weights is physically meaningful, but
they also show that each deformation has to be analyzed on its own
terms.

The present work examines a specific deformation proposed by Chung,
Hassanabadi, and Boumali~\cite{chung2022}, in which the inverse
temperature is defined on a discrete lattice,
$\mathcal{B}=\{\beta_n=\beta_0+nb\}$, and the ordinary Boltzmann factor
is replaced by
\be
\BE(\beta_n)=(1-bE)^n .
\label{eq:discBoltz}
\ee
Here $b$ is a positive parameter with dimensions of inverse energy, and
the continuum exponential is recovered in the correlated limit
$b\to0$, $n\to\infty$, with $nb=\beta$ fixed. The most important
feature of this deformation is not simply that it changes the
high-energy tail; rather, positivity imposes the condition $E<1/b$.
Thus the deformed thermal weight has compact support in energy. This
bounded support is the basic mechanism behind the spectral,
black-hole, and nonequilibrium effects studied in this paper.

Our objective is to determine which consequences follow directly from
this compactly supported thermal weight. The analysis is organized
around three connected questions. First, how does the discrete factor
modify the occupation of bosonic modes and the corresponding spectral
integrals? Second, what is the leading effect of this modification on
Hawking emission from a Schwarzschild black hole, and how should the
resulting endpoint of the thermal description be interpreted? Third,
how are nonequilibrium work identities modified when the thermal
weight is no longer exponential and therefore no longer multiplicative
in the energy difference? These questions are treated within the same
formal framework, which makes it possible to identify the precise role
played by the parameter $b$ in equilibrium, gravitational, and
nonequilibrium settings.

A central point of the paper is that the discrete weight must not be
treated as an exact finite-$b$ version of a geometric Boltzmann factor.
For a bosonic mode with occupation number $k$, one has
$(1-bk\varepsilon)^n\neq[(1-b\varepsilon)^n]^k$ at finite $b$.
Consequently, the usual derivation of the Bose--Einstein distribution
from a geometric series is not exact. We therefore introduce an
effective discrete Bose--Einstein occupation law and compare it with
the exact truncated partition sum. This comparison determines the
domain in which the effective law is reliable and shows that it
captures the part of the spectrum that dominates the Hawking
luminosity integral.

The black-hole application is then developed as a controlled
consequence of the modified spectral occupation. The bounded thermal
weight suppresses high-energy modes and yields a finite-$b$ correction
to the Stefan--Boltzmann luminosity. When the deformation scale is
calibrated at the Planck energy, the thermal channel terminates at a
Planckian mass. This endpoint should be understood as the limit of
validity of the thermal description generated by the discrete weight,
not as a proof of an absolutely stable remnant. The same distinction
is important in the entropy calculation: the brick-wall analysis gives
a logarithmic shift linear in $b$, but its numerical coefficient
depends on the chosen near-horizon regulator. We therefore use the
entropy result as a diagnostic of the deformation rather than as a
universal prediction.

The nonequilibrium part of the analysis reveals a different aspect of
the deformation. In the ordinary exponential case, the ratio of
thermal weights along a protocol depends on the work through
$e^{-\beta W}$. For the discrete weight, however, the natural ratio is
$(1-bE_f)^n/(1-bE_i)^n$, which depends separately on the initial and
final energies. This loss of exponential multiplicativity has two
important consequences. An exact Jarzynski-type identity can still be
proved for deterministic measure-preserving protocols, but the
corresponding Crooks ratio cannot be reduced to a function of work
alone at finite $b$. Moreover, the first-order Jensen inequality
contains the mixed moment $\langle E_i W\rangle$ rather than only work
cumulants. Hence the discrete framework preserves an integral
fluctuation identity while changing the statistical information needed
to characterize nonequilibrium transformations.

The main results of the paper can be summarized as follows. We derive
an effective discrete Bose--Einstein law and establish its validity by
comparison with the exact truncated partition function. We obtain the
leading finite-$b$ correction to the Schwarzschild luminosity and show
that, for the Planck-scale calibration $b=8\pi/\Ep$, the thermal
description reaches an endpoint at $M_{\mathrm{cut}}=\mP$. We prove an
exact discrete Jarzynski identity for the discrete work functional
$\Wd$ and show that no work-only Crooks ratio exists at finite $b$.
We also derive the corresponding first-order second-law inequality and
identify the mixed moment $\langle E_i W\rangle$ as the characteristic
correction. Finally, we compute the brick-wall entropy diagnostic and
make explicit its regulator dependence.

The paper is organized as follows. Section~\ref{sec:prelim}
introduces the discrete framework, the partition function, and the
effective Bose--Einstein law. Section~\ref{sec:BH} applies the formalism
to black-hole radiation, including the luminosity correction, the
Planckian endpoint, and the entropy diagnostic. Section~\ref{sec:fluctuation} develops the nonequilibrium theory, proves the
discrete Jarzynski identity, establishes the work-only obstruction, and
discusses possible observable implications. Section~\ref{sec:summary}
summarizes the conclusions and outlines open problems. The technical
details of the effective Bose--Einstein approximation and of the
brick-wall computation are given in Appendices~\ref{app:BE} and
\ref{app:entropy}, respectively.

\section{Discrete framework}
\label{sec:prelim}

We collect here the basic structures used throughout the paper. The
parameter $b$ has dimensions of inverse energy; $\hbar$ and $c$ are
kept explicit in dimensional formulas but set to unity in purely
statistical expressions; $\kB$ appears only when converting $\beta$
to thermodynamic temperature, $T=1/(\kB\beta)$. We distinguish
carefully between exact identities (valid at any finite $b$) and
controlled effective approximations valid for $b\to 0$ in the
correlated continuum limit.

\subsection{Discrete Boltzmann factor and continuum limit}

The inverse-temperature lattice is
\be
\beta_{n}=\beta_{0}+nb,\qquad n=0,1,2,\dots,
\label{eq:lattice}
\ee
with $b>0$; we set $\beta_{0}=0$ throughout. The discrete Boltzmann
factor is~\cite{chung2022}
\be
\BE(\beta_{n})=(1-bE)^{n},
\label{eq:BF}
\ee
which is positive iff $0\le E<1/b$, defining the energy cutoff
\be
E_{\max}=1/b.
\label{eq:Emax}
\ee
The standard exponential is recovered through the correlated limit
\be
\lim_{\substack{b\to 0,\;n\to\infty\\ nb=\beta}}(1-bE)^{n}
   = e^{-\beta E},
\label{eq:contlim}
\ee
which is the elementary identity $(1-x/n)^{n}\to e^{-x}$ at $x=\beta E$.
At finite $b$, the weight obeys the lattice recursion
\be
\BE(\beta_{n+1})-\BE(\beta_{n})=-bE\,\BE(\beta_{n}),
\label{eq:lattice_recursion}
\ee
the forward-difference analogue of
$\partial_{\beta}e^{-\beta E}=-E e^{-\beta E}$. The deformation thus
replaces continuous cooling in $\beta$ by a stepwise evolution along
the lattice, with the compact-support condition acting already at the
level of this finite-difference dynamics.

\paragraph{Lattice vs continuum statements.}
Exact statements are made at integer $n$. When we write $n=\beta/b$
and expand in $b$, we are taking the correlated large-$n$ continuum
approximation with $\beta$ fixed. Several results below are
\emph{exact} for the lattice formulation, whereas others are
controlled effective approximations valid at small $b$.

\paragraph{Comparison with Tsallis statistics.}
The Tsallis $q$-exponential~\cite{tsallis1988} is
\be
e_{q}(x)=[1+(1-q)x]^{1/(1-q)},\qquad q\neq 1.
\label{eq:qexp}
\ee
Setting $q=1-bE$ formally relates $e_{q}(-\beta E)$ to
Eq.~\eqref{eq:BF}, but the two are \emph{not} identical: the Tsallis
exponent is the continuous real number $1/(1-q)$, whereas the discrete
exponent is the non-negative integer $n$. The relation is therefore
an asymptotic correspondence in the correlated continuum limit; the
discrete framework should not be presented as an exact finite-$b$
realization of Tsallis statistics.

\subsection{Discrete partition function and free energy}

For a system with discrete energies $\{E_{k}\}$ and degeneracies
$g_{k}$, the discrete partition function is
\be
\Zd(n)=\sum_{k:\,E_{k}<1/b} g_{k}\,(1-bE_{k})^{n},
\label{eq:Zd}
\ee
and the discrete free energy~\cite{chung2022} reads
\be
\Fd=-\frac{1}{b}\,\ln\!\left[\frac{\Zd(n+1)}{\Zd(n)}\right].
\label{eq:Fd}
\ee
For two protocol values $\lambda_{i}$, $\lambda_{f}$, the
corresponding discrete free-energy difference $\Delta\Fd$ is fixed by
the equivalent natural identity
\be
\bigl(1-b\,\Delta\Fd\bigr)^{n}\equiv
\frac{Z_{d}^{(f)}}{Z_{d}^{(i)}},
\label{eq:DeltaFd_def}
\ee
which agrees with $-(1/\beta)\ln(Z^{(f)}/Z^{(i)})$ in the continuum
limit but differs from it at $\mathcal{O}(b)$. We use
Eq.~\eqref{eq:DeltaFd_def} as the operational definition of
$\Delta\Fd$ throughout, since it is the form that appears in the
exact discrete Jarzynski identity (Theorem~\ref{thm:jarz}).
The discrete internal energy is
\be
\Ud=\frac{1}{b}\!\left[1-\frac{\Zd(n+1)}{\Zd(n)}\right]
   =\frac{1}{b}\bigl[1-e^{-b\Fd}\bigr],
\label{eq:Ud}
\ee
and the discrete heat capacity is the second finite difference,
\be
C_{d}^{(b)}
   =\frac{\Ud(n+1)-\Ud(n)}{T_{n+1}-T_{n}}
   =\frac{b\,\Delta_{n}\Ud}
        {1/(\kB\beta_{n+1})-1/(\kB\beta_{n})}.
\label{eq:Cd}
\ee
In the continuum limit, $\Ud\to\avg{E}$ and
$C_{d}^{(b)}\to\kB\beta^{2}\Var(E)$, recovering the standard
fluctuation--dissipation relation. At finite $b$, however, compact
support implies $\Ud\le 1/b$ and the sign of $C_{d}^{(b)}$ remains
model-dependent.

The bounded support changes the qualitative structure of the canonical
ensemble: every thermodynamic sum is taken over a compact domain, so
all moments of $E$ are finite by construction. The high-temperature
regime is then governed by how the occupation weight reorganizes
within $0\le E<1/b$, which is the common origin of the spectral,
Hawking, and nonequilibrium modifications discussed below.

\subsection{Effective discrete Bose--Einstein law}
\label{subsec:BE}

The single-mode partition function for a bosonic mode of energy
$\varepsilon$ is the truncated sum
\be
\Zd^{\mathrm{BE,exact}}(\varepsilon)
   =\sum_{k=0}^{k_{\max}}(1-b\,k\varepsilon)^{n},
   \quad k_{\max}=\lfloor 1/(b\varepsilon)\rfloor.
\label{eq:ZBE_exact}
\ee
Unlike $e^{-\beta k\varepsilon}=(e^{-\beta\varepsilon})^{k}$, the
discrete weight does \emph{not} factorize:
$(1-bk\varepsilon)^{n}\neq[(1-b\varepsilon)^{n}]^{k}$ for $k\ge 2$.
Eq.~\eqref{eq:ZBE_exact} therefore admits no closed-form geometric
sum at finite $b$. A direct expansion gives
\be
(1-bk\varepsilon)^{n}
  =[(1-b\varepsilon)^{n}]^{k}
   \Bigl[1+\tfrac{1}{2}nb^{2}\varepsilon^{2}k(k-1)+\mathcal{O}(b^{3})\Bigr],
\label{eq:factor_err}
\ee
so factorization is restored to leading order whenever
$nb^{2}\varepsilon^{2}=b\beta\varepsilon\cdot b\varepsilon\ll 1$.

Motivated by this, we \emph{define} the effective discrete
Bose--Einstein occupation by formal analogy with the geometric-series
closure that holds in the exponential case,
\be
\avg{n_{\varepsilon}}_{d}
  \equiv\frac{1}{(1-b\varepsilon)^{-n}-1}
\label{eq:nBE}
\ee
This definition has three properties that justify its use below:
(i)~it reduces to the Planck distribution in the continuum limit;
(ii)~it incorporates the single-mode cutoff at $\varepsilon=1/b$;
(iii)~it captures the leading $\mathcal{O}(b)$ correction to the
exact occupation throughout the Wien regime.

\begin{lemma}[Validity of the effective BE law]
\label{lem:BEvalid}
For $b\varepsilon\le 0.01$ and $\beta\varepsilon\ge 1$, the
relative error
$\bigl|\avg{n_{\varepsilon}}_{d}-\avg{k}_{\mathrm{exact}}\bigr|/
 \avg{k}_{\mathrm{exact}}$
is bounded by $2\%$. In the Rayleigh--Jeans regime
$\beta\varepsilon\lesssim 0.3$, the effective law underestimates the
true suppression by a factor $\sim 4\nP+1$, where
$\nP=1/(e^{\beta\varepsilon}-1)$.
\end{lemma}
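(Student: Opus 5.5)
The plan is to compare both occupations against the Planck value $\nP$ through their $\mathcal{O}(b)$ expansions at fixed $\beta=nb$, and then to bound the remainder by direct evaluation on the stated parameter domain. Write $x=\beta\varepsilon$ and $\delta=b\varepsilon$, so that $n\delta=x$.

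\emph{Effective law.} Using $-n\ln(1-\delta)=x+x\delta/2+\mathcal{O}(\delta^{2})$, Eq.~\eqref{eq:nBE} gives
\be
\avg{n_{\varepsilon}}_{d}=\nP-\tfrac12 x\delta\,\nP(\nP+1)+\mathcal{O}(\delta^{2}).
\ee

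\emph{Exact sum.} For the truncated sum \eqref{eq:ZBE_exact} I would use Eq.~\eqref{eq:factor_err} in its logarithmic form,
\be
n\ln(1-k\delta)=-kx-\tfrac12 k^{2}x\delta+\mathcal{O}(k^{3}x\delta^{2}).
\ee
This yields
\be
Z_{\rm exact}=\sum_{k}q^{k}\bigl(1-\tfrac12 x\delta k^{2}\bigr),\qquad q=e^{-x}.
\ee
The terms with $k>k_{\max}$ that are dropped are exponentially small, of order $q^{1/\delta}$, and are negligible. Then $\avg{k}_{\rm exact}=\langle k\rangle_{0}-\tfrac12 x\delta\,\mathrm{Cov}_{0}(k,k^{2})$, where averages are taken in the geometric distribution with mean $\nP$. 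Its cumulants give
\be
\mathrm{Cov}_{0}(k,k^{2})=\kappa_{3}+2\nP\kappa_{2},\qquad \kappa_{2}=\nP(\nP+1),\qquad \kappa_{3}=\nP(\nP+1)(2\nP+1).
\ee
Hence $\mathrm{Cov}_{0}(k,k^{2})=\nP(\nP+1)(4\nP+1)$, and
\be
\avg{k}_{\rm exact}=\nP-\tfrac12 x\delta\,\nP(\nP+1)(4\nP+1)+\mathcal{O}(\delta^{2}).
\ee
Comparing the two expansions, the first-order suppression of the effective law is smaller than the true one by exactly the factor $4\nP+1$. This proves the Rayleigh--Jeans statement, since $\nP\gtrsim 3$ there makes the factor substantial.

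\emph{Wien-side bound.} The relative error is
\be
\frac{\bigl|\avg{n_{\varepsilon}}_{d}-\avg{k}_{\rm exact}\bigr|}{\avg{k}_{\rm exact}}\simeq 2x\delta\,\nP(\nP+1)\le 2x\delta(\nP+1).
\ee
For $x\ge1$ we have $x\nP(\nP+1)\le 1\cdot 0.582\cdot1.582\approx0.92$, and this quantity is decreasing in $x$. With $\delta\le0.01$ the first-order relative error is therefore at most about $1.8\%$.

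The main obstacle is controlling the $\mathcal{O}(\delta^{2})$ remainder uniformly, because $x$ is unbounded and large $k$ enters through $k^{3}x\delta^{2}$. I would handle this by splitting the range of $x$:
\begin{itemize}
\item For $x\ge1$ the geometric weights make all moments of $k$ of order one, so the remainder is $\mathcal{O}(x^{2}\delta^{2}e^{-x})$, which is tiny.
\item For $x\to\infty$ both occupations behave like $(1-\delta)^{n}$, with a relative error of order $q$ from the $k=2$ term.
\end{itemize}
As a fallback, since the domain is a compact corner in $(\delta,x)$ plus a tail, I would supplement the analytic bound with a monotonicity argument in $\delta$ at fixed $x$ together with a numerical check at the corner $\delta=0.01$, $x=1$, where the error is maximal.
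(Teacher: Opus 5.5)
Your proposal is correct. For the Rayleigh--Jeans clause it is the paper's own argument (Appendix~\ref{app:BE}). Both expand the weight as $e^{-kx}\bigl[1-\tfrac12 x\delta k^{2}\bigr]$ and reduce $\avg{k}_{\mathrm{exact}}-\nP$ to $-\tfrac12 x\delta$ times the Planck covariance of $k$ and $k^{2}$. You evaluate that covariance via cumulants, $\kappa_{3}+2\nP\kappa_{2}$, where the paper uses raw moments $\avg{k^{3}}_{P}-\nP\avg{k^{2}}_{P}$; the result $\nP(\nP+1)(4\nP+1)$ is the same. The expansion you actually use is Eq.~\eqref{eq:weight_exp}, not Eq.~\eqref{eq:factor_err}. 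That is just as well, since the latter as printed has the sign of its $k(k-1)$ term reversed; note that $(1-2\delta)^{n}<(1-\delta)^{2n}$.

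The substantive difference is the $2\%$ clause, which the paper justifies only by numerical comparison (Fig.~\ref{fig:bevsexact}(a)). You instead derive the leading relative error $2x\delta\,\nP(\nP+1)$. You show it increases with $\delta$ and decreases with $x$, and you locate the worst case at $(x,\delta)=(1,0.01)$, with value $\approx1.84\%$. This explains where both the $2\%$ figure and the threshold $\beta\varepsilon\ge1$ come from. You also dispose of the truncation at $k_{\max}$ explicitly, which the paper leaves implicit.

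The price is that the margin to $2\%$ is only about $0.16$ percentage points. Your ``moments of order one'' remainder estimate and the asserted monotonicity in $\delta$ are therefore not by themselves a proof. The clause still rests on a direct evaluation at the corner, which is the same level of rigor as the paper. That evaluation does close it. At $n=100$ the truncated sum gives $\avg{k}_{\mathrm{exact}}\approx0.5670$ against $\avg{n_{\varepsilon}}_{d}\approx0.5774$, a relative error of about $1.84\%$. So the $\mathcal{O}(\delta^{2})$ correction to the relative error there is below $10^{-4}$ and harmless.
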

\noindent The proof, by direct comparison of the $\mathcal{O}(b)$
expansions of the exact and effective occupations, is given in
Appendix~\ref{app:BE}. The relevance of Lemma~\ref{lem:BEvalid} is
that the Hawking luminosity integral is dominated by modes with
$\beta\varepsilon\gtrsim 1$ (cf.\ Sec.~\ref{subsec:Hawking}), so the
deep-infrared limitation does not affect (R2).

Writing $n=\beta/b$ and expanding to first order in $b$,
\be
\avg{n_{\varepsilon}}_{d}
  \approx\frac{1}{e^{\beta\varepsilon}-1}
   -\frac{b\beta\varepsilon^{2}}{2}\,
    \frac{e^{\beta\varepsilon}}{(e^{\beta\varepsilon}-1)^{2}}
   +\mathcal{O}(b^{2}).
\label{eq:nBEexpand}
\ee
The correction is negative for all $\varepsilon>0$, confirming that
the effective discrete distribution \emph{systematically suppresses}
high-energy occupation. Two points are worth emphasizing.
Eq.~\eqref{eq:nBE} should be read as an effective single-mode law,
justified by Lemma~\ref{lem:BEvalid}, rather than as an exact
finite-$b$ identity. The $\mathcal{O}(b)$ correction in
Eq.~\eqref{eq:nBEexpand} scales as $b\beta\varepsilon^{2}$ well before
the cutoff is reached, so discreteness effects switch on smoothly
rather than only at the endpoint.

Figure~\ref{fig:planck} illustrates this behavior on a logarithmic
vertical scale, using the dimensionless combination
$x=\beta\varepsilon$ for a representative ratio $b/\beta$. Three
features are simultaneously visible: (i)~near-continuum behavior at
small $x$; (ii)~progressive suppression at intermediate $x$; and
(iii)~exact truncation at $x_{\max}=\beta/b$.

\begin{figure*}[t]
\centering
\includegraphics[width=0.85\textwidth]{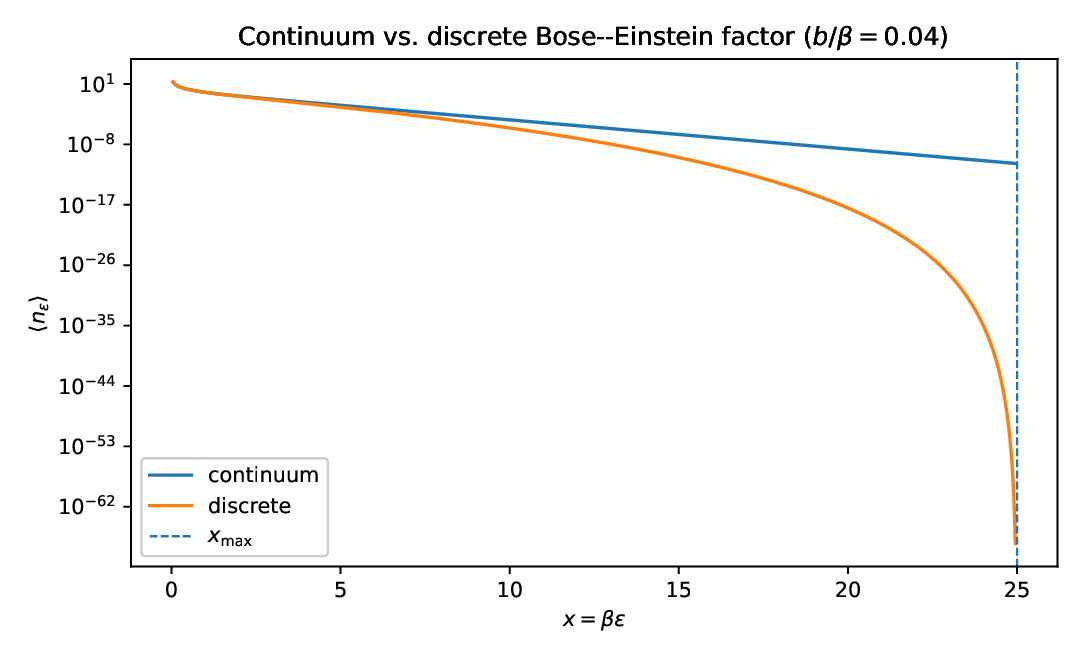}
\caption{Dimensionless comparison of the continuum Bose--Einstein
factor and its discrete counterpart, plotted against
$x=\beta\varepsilon$ for $b/\beta=0.04$ ($n=25$). The vertical axis
is logarithmic so the infrared overlap and the ultraviolet
suppression are visible in a single panel. Three features control
the results below: near-continuum behavior in the infrared, stronger
suppression at intermediate $x$, and exact truncation of the
spectrum at $x_{\max}=\beta/b$.}
\label{fig:planck}
\end{figure*}

\section{Black-hole sector}
\label{sec:BH}

We now apply the discrete framework to the most direct gravitational
arena: the Hawking spectrum, in which thermal occupation numbers
appear explicitly. Compact support of the discrete weight feeds
directly into the spectrum, the integrated luminosity, and the
interpretation of the terminal stage of evaporation.

For a Schwarzschild black hole of mass $M$, the Hawking
temperature~\cite{hawking1975} is
\be
\TBH=\frac{\hbar c^{3}}{8\pi GM\kB},\qquad
\bH=\frac{8\pi GM}{\hbar c^{3}}.
\label{eq:TH}
\ee
Placing $\bH$ on the lattice $\beta_{n}=nb$ and identifying the mass
spacing $\delta M$ with the lattice quantum---as motivated by
Bekenstein's area
quantization~\cite{bekenstein1973,bekenstein1974} and by loop quantum
gravity~\cite{rovelli2004,thiemann2007,dreyer2003}---fixes
\be
b=\frac{8\pi G\,\delta M}{\hbar c^{3}}.
\label{eq:bBH}
\ee
The Planck-scale calibration $\delta M=\mP=(\hbar c/G)^{1/2}$ then
gives
\be
b=\frac{8\pi}{\Ep},
\label{eq:bPlanck}
\ee
with $\Ep=\mP c^{2}\approx 1.22\times 10^{19}\,\mathrm{GeV}$. The
energy cutoff is then
$E_{\max}=\Ep/(8\pi)\sim 5\times 10^{17}\,\mathrm{GeV}$. The integer
$n=M/\delta M$ ranges from $n\sim 10^{38}$ for a solar-mass black hole
down to $n=1$ at the cutoff.

\begin{remark}
The identification of $b$ with a Planckian mass spacing is a
phenomenological calibration of the deformation scale, not an
independent derivation from a microscopic theory of quantum gravity.
The statistical construction itself only requires $b>0$ and $E<1/b$.
\end{remark}

\subsection{Spectrum, luminosity, and thermal cutoff}
\label{subsec:Hawking}

Substituting Eq.~\eqref{eq:nBE} into the Planckian Hawking spectrum
yields the discrete occupation
\be
\avg{n_{\omega}}_{d}=\frac{1}{(1-b\hbar\omega)^{-n}-1}.
\label{eq:Hspec}
\ee
We discuss its key features in turn.

\paragraph{Ultraviolet cutoff.}
The spectrum vanishes for $\omega\ge\omega_{\max}=1/(b\hbar)$. At
Planck-scale $b$, this removes trans-Planckian modes from the
effective thermal spectrum. The cutoff should be read as a
regularization of the thermal channel only, not as a complete
solution of trans-Planckian issues in dynamical
backgrounds~\cite{jacobson1991,unruh1995,martin2001}.

\paragraph{Infrared agreement.}
For $b\hbar\omega\ll 1$, expansion of Eq.~\eqref{eq:Hspec} reproduces
the Planck spectrum at leading order, with a negative subleading
correction
\be
\avg{n_{\omega}}_{d}
   \approx\frac{1}{e^{\bH\hbar\omega}-1}
   -\frac{b\bH(\hbar\omega)^{2}}{2}\,
    \frac{e^{\bH\hbar\omega}}{(e^{\bH\hbar\omega}-1)^{2}}.
\label{eq:IRagreement}
\ee

\paragraph{Grey-body modification.}
With a frequency-dependent grey-body factor $\Gamma_{\ell}(\omega)$
for partial wave $\ell$, the differential luminosity per mode reads
\be
\frac{\dd L_{\ell}}{\dd\omega}
   =\frac{\Gamma_{\ell}(\omega)\,\hbar\omega^{3}}{2\pi}\,
    \frac{1}{(1-b\hbar\omega)^{-n}-1}.
\label{eq:greybody}
\ee
The deformation modifies only the occupation factor; standard
grey-body corrections still encode propagation through the curvature
potential. In the analytic estimate below we use the constant-
absorption $s$-wave approximation, which keeps the coefficient
transparent. Frequency-dependent factors~\cite{page1976} would alter
the numerical prefactor but neither the compact-support statement nor
the sign of the leading depletion.

\paragraph{Spin-dependent emission.}
For fermions,
$\avg{n_{\omega}}_{d,\mathrm{FD}}=1/[(1-b\hbar\omega)^{-n}+1]$, with
the same UV cutoff. The fermionic-to-bosonic ratio is
\be
\frac{\avg{n_{\omega}}_{d,\mathrm{FD}}}{\avg{n_{\omega}}_{d,\mathrm{BE}}}
   =\tanh\!\Bigl[\tfrac{n}{2}\ln\bigl(1/(1-b\hbar\omega)\bigr)\Bigr],
\label{eq:FD_ratio}
\ee
which reduces to $\tanh(\beta\hbar\omega/2)$ as $b\to 0$.

\subsubsection*{Luminosity correction}

In the constant-absorption $s$-wave approximation, the total emitted
power is
\be
\Ldisc=\frac{\sigma_{\mathrm{abs}}}{2\pi^{2}}
   \int_{0}^{1/(b\hbar)}\!\!
   \frac{\hbar\omega^{3}\,\dd\omega}{(1-b\hbar\omega)^{-n}-1}.
\label{eq:Ld}
\ee
Using the expansion
\be
\frac{1}{(1-b\hbar\omega)^{-n}-1}
  =\frac{1}{e^{\bH\hbar\omega}-1}
   -\frac{b\bH(\hbar\omega)^{2}\,e^{\bH\hbar\omega}}
        {2(e^{\bH\hbar\omega}-1)^{2}}
   +\mathcal{O}(b^{2}),
\label{eq:integrand_expand}
\ee
substituting into Eq.~\eqref{eq:Ld}, and applying the integral identity
\be
\int_{0}^{\infty}\!\!\dd x\,\frac{x^{5}\,e^{x}}{(e^{x}-1)^{2}}
  =120\,\zeta(5),
\label{eq:zeta5}
\ee
we find
\be
\Ldisc=\LSB\!\left[1-\frac{900\,\zeta(5)}{\pi^{4}}\,b\kB\TBH
   +\mathcal{O}(b^{2})\right]\!,
\label{eq:Ld_expand}
\ee
where $\LSB=\sigma_{\mathrm{abs}}\pi^{2}\kB^{4}\TBH^{4}/(120\hbar^{3})$
and $\zeta(5)\approx 1.037$. Numerically the coefficient is
$900\zeta(5)/\pi^{4}\approx 9.58$. The negative sign confirms that the
discrete spectrum radiates \emph{less power} than the Planckian one at
the same Hawking temperature.

Physically, Eq.~\eqref{eq:Ld_expand} expresses the fact that the
compact-support deformation systematically depletes the ultraviolet
modes that dominate the Stefan--Boltzmann integral. The numerical
coefficient is not universal (it depends on the constant-absorption
$s$-wave approximation), but the negative sign and the linear scaling
with $b\kB\TBH$ are robust. This is one of the clearest quantitative
consequences that follows directly from the thermal deformation,
independent of any additional assumption about microscopic gravity.

\subsection{Planckian endpoint of the thermal description}
\label{subsec:evap}

The mass-loss rate $\dd M/\dd t=-\Ldisc/c^{2}$ combined with
Eq.~\eqref{eq:Ld_expand} gives
\be
\frac{\dd M}{\dd t}
   =-\frac{C}{M^{2}}\!
    \left[1-\frac{900\,\zeta(5)}{\pi^{4}}\,
                  \frac{b\hbar c^{3}}{8\pi GM}+\mathcal{O}(b^{2})\right]\!,
\label{eq:evap}
\ee
where $C>0$. The correction reduces the thermal evaporation rate as
$M$ approaches the cutoff scale, but Eq.~\eqref{eq:evap} is a
first-order result in $b\kB\TBH$ and should not be extrapolated all
the way to the Planck regime.

A more conservative nonperturbative statement follows from the
lattice itself. With $n=M/\delta M$ and $\delta M=\mP$, the smallest
positive lattice value $n=1$ corresponds to
\be
M_{\mathrm{cut}}=\frac{\hbar c^{3}b}{8\pi G}=\delta M=\mP,
\label{eq:remnant}
\ee
equivalently $\kB\TBH=1/b$. The discrete \emph{thermal description}
therefore reaches a Planckian endpoint when the characteristic
Hawking energy is of the same order as the compact-support scale.

\begin{remark}
Eq.~\eqref{eq:remnant} does not establish the existence of an
absolutely stable remnant. It states only that the continuum thermal
evaporation picture has reached the boundary of the finite-$b$
description: the spectrum is strongly depleted, the energy domain is
compact, and the mass lattice has no smaller positive point.
Nonthermal tunneling, pair creation beyond the occupation-number
picture, and other quantum-gravitational
processes~\cite{almheiri2013,penington2020} could still be relevant.
\end{remark}

The mechanism here is therefore comparable to remnant
scenarios~\cite{adler2001,scardigli1999,cavaglia2004,chen2015} in
which evaporation slows near a Planckian scale. Compared with
generalized-uncertainty-principle and noncommutative
approaches~\cite{nicolini2006,smailagic2003}, the distinguishing
feature here is that the slowdown is traced directly to the compact
support of the thermal weight, rather than to a modified uncertainty
relation or a deformed near-horizon geometry. Establishing an
actually stable remnant would require a separate dynamical analysis
beyond the thermal sector, including the degeneracy issues
emphasized in the remnant literature~\cite{chen2015,unruh2017}.

\subsection{Brick-wall entropy diagnostic (non-universal)}
\label{subsec:entropy}

The entropy discussion must be interpreted with more caution than the
luminosity calculation. The aim here is not a microscopic derivation
of black-hole entropy from the discrete weight, but the narrower
question: if the usual near-horizon thermal mode count is deformed by
the bounded occupation factor, what type of subleading correction is
induced in a standard brick-wall computation?

Replacing the continuum thermal weight in the brick-wall free energy
by the discrete one (see Appendix~\ref{app:entropy} for details), the
leading divergence still renormalizes the area term, and the first
discreteness correction generates an additional logarithmic piece.
A representative first-order result is
\be
S^{(d)}_{\mathrm{BH}}
  =\frac{A}{4\lP^{2}}
   +\bigl(\alpha_{0}+\delta\alpha_{b}^{(\mathrm{bw})}\bigr)
    \,\ln\!\left(\frac{A}{\lP^{2}}\right)+s_{0}+\mathcal{O}(b^{2}),
\label{eq:SBH}
\ee
with $\alpha_{0}=-1/2$ recovering the standard continuum
value~\cite{carlip2000,kaul2000,meissner2004} and
$\delta\alpha_{b}^{(\mathrm{bw})}\propto b$ a regulator-dependent
discreteness shift. The structural message is that a bounded thermal
weight induces a logarithmic correction \emph{linear in $b$}; the
explicit normalization, derived in Appendix~\ref{app:entropy} under
specific brick-wall conventions, should be read as an
order-of-magnitude indicator rather than a universal prediction. We
do \emph{not} promote the numerical value of
$\delta\alpha_{b}^{(\mathrm{bw})}$ to the status of a physical result.

In the continuum limit only the standard logarithmic term with
$\alpha_{0}=-1/2$ survives. At finite $b$, any specific value of
$\delta\alpha_{b}^{(\mathrm{bw})}$ obtained within the brick-wall
scheme should be compared with coefficients produced by loop quantum
gravity~\cite{kaul2000,meissner2004} and conformal
techniques~\cite{carlip2000} as a consistency check rather than as a
prediction. What the estimate provides is a diagnostic: the linear
dependence on $b$ originates from the same low-order expansion that
controls the spectral suppression, while the numerical coefficient
inherits the regulator-sensitive content of the near-horizon density
of states.

\section{Nonequilibrium work relations}
\label{sec:fluctuation}

We now turn to the nonequilibrium sector, which probes the
statistical structure most directly. The Jarzynski
equality~\cite{jarzynski1997} and the Crooks
theorem~\cite{crooks1999} link nonequilibrium work measurements to
equilibrium free-energy differences and underpin the second law at
the fluctuation level~\cite{seifert2012,esposito2009}. The standard
derivations rely on the multiplicativity
$e^{-\beta W}e^{-\beta E_{i}}=e^{-\beta(W+E_{i})}$, which $(1-bx)^{n}$
does not satisfy. We construct the discrete generalization
accordingly.

\subsection{Discrete work functional and exact Jarzynski identity}
\label{subsec:Jarz}

For a trajectory $\gamma:i\to f$ connecting an initial microstate of
energy $E_{i}^{(\lambda_{i})}$ to a final microstate of energy
$E_{f}^{(\lambda_{f})}$, define the \emph{discrete work functional}
\be
\Wd[\gamma]
  \equiv
  \left(\frac{1-bE_{f}^{(\lambda_{f})}}
             {1-bE_{i}^{(\lambda_{i})}}\right)^{\!n}.
\label{eq:Wd}
\ee
In the continuum limit $b\to 0$ with $nb=\beta$ fixed,
$\Wd\to e^{-\beta W}$ where $W=E_{f}^{(\lambda_{f})}-E_{i}^{(\lambda_{i})}$.

\begin{theorem}[Exact discrete Jarzynski identity]
\label{thm:jarz}
Let the driving protocol be implemented by a deterministic bijection
(hence measure-preserving)
$\Phi:i\mapsto f=\Phi(i)$ on the finite state space $\{E_{k}<1/b\}$,
with initial distribution
\be
p_{i}^{(d)}=\frac{(1-bE_{i}^{(\lambda_{i})})^{n}}{Z_{d}^{(i)}}.
\label{eq:initial_dist}
\ee
Then
\be
\bigl\langle\Wd\bigr\rangle
  =\frac{Z_{d}^{(f)}}{Z_{d}^{(i)}}
  \equiv (1-b\,\Delta\Fd)^{n},
\label{eq:discJarz}
\ee
where $\Delta\Fd$ is the discrete free-energy difference defined
through Eq.~\eqref{eq:Fd}.
\end{theorem}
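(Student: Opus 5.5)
The identity follows from a direct computation of the average of $\Wd$ over the initial ensemble, with the weight of the initial state cancelling the denominator of the work functional. Writing the average explicitly with Eq.~\eqref{eq:initial_dist},
\be
\bigl\langle\Wd\bigr\rangle
 =\sum_{i}\frac{(1-bE_{i}^{(\lambda_{i})})^{n}}{Z_{d}^{(i)}}
   \left(\frac{1-bE_{\Phi(i)}^{(\lambda_{f})}}{1-bE_{i}^{(\lambda_{i})}}\right)^{n}
 =\frac{1}{Z_{d}^{(i)}}\sum_{i}\bigl(1-bE_{\Phi(i)}^{(\lambda_{f})}\bigr)^{n}.
\ee
The cancellation is legitimate because every admissible initial state satisfies $E_i<1/b$, so $1-bE_i>0$ and no division by zero or sign ambiguity in the $n$-th power occurs. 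This is exactly the step where the lack of multiplicativity noted at the start of Sec.~\ref{sec:fluctuation} does no harm: the ratio in Eq.~\eqref{eq:Wd} is built so that the initial weight cancels identically, with no need to factor $(1-bE_f)^n$ into a work part and an initial-energy part.

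Next, use that $\Phi$ is a bijection of the (finite) state space onto the final state space. Reindexing the sum by $f=\Phi(i)$ turns $\sum_i(1-bE_{\Phi(i)}^{(\lambda_f)})^n$ into $\sum_f(1-bE_f^{(\lambda_f)})^n$. By Eq.~\eqref{eq:Zd} this sum is $Z_d^{(f)}$. Finiteness of the state space makes the reindexing trivial: there are no convergence issues, and degeneracies are handled by summing over microstates rather than levels. Dividing by $Z_d^{(i)}$ gives $Z_d^{(f)}/Z_d^{(i)}$. Invoking the operational definition Eq.~\eqref{eq:DeltaFd_def} then yields $(1-b\Delta\Fd)^n$.

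The main point requiring care is the hypothesis on the image of $\Phi$. One must ensure that $\Phi$ maps the admissible initial states $\{E^{(\lambda_i)}_k<1/b\}$ bijectively onto the admissible final states $\{E^{(\lambda_f)}_k<1/b\}$, so that the reindexed sum runs over precisely the support appearing in $Z_d^{(f)}$. If some image state had $E_f\ge 1/b$, its weight $(1-bE_f)^n$ could be negative or nonzero outside the truncated sum, and the identity would fail. I would therefore state explicitly that the bijection is between these compact-support state sets; this is how I read "on the finite state space $\{E_k<1/b\}$" in the theorem.

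Finally, $\Delta\Fd$ must be understood through Eq.~\eqref{eq:DeltaFd_def} rather than literally as a difference of Eq.~\eqref{eq:Fd}; this is consistent with the remark following that definition. The equality in Eq.~\eqref{eq:discJarz} is then a definitional identification. As a consistency check, taking the continuum limit recovers $\langle e^{-\beta W}\rangle=e^{-\beta\Delta F}$.
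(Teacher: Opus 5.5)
Your proof is correct and follows the paper's argument: cancel the initial weight against the denominator of $\Wd$, reindex the sum through the bijection $f=\Phi(i)$ to obtain $Z_d^{(f)}/Z_d^{(i)}$, and identify this ratio with $(1-b\,\Delta\Fd)^n$. Your added points make explicit hypotheses the paper leaves implicit, and they are sound: positivity of $1-bE_i$, the requirement that $\Phi$ map the admissible initial set onto the admissible final set, and reading $\Delta\Fd$ through Eq.~\eqref{eq:DeltaFd_def} rather than Eq.~\eqref{eq:Fd}.
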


\begin{proof}
By definition,
\be
\avg{\Wd}
  =\sum_{i}p_{i}^{(d)}\,\Wd[i\!\to\!\Phi(i)]
  =\frac{1}{Z_{d}^{(i)}}\sum_{i}(1-bE_{\Phi(i)}^{(\lambda_{f})})^{n}.
\label{eq:proof_step1}
\ee
Since $\Phi$ is a bijection of the finite state space, the substitution
$f=\Phi(i)$ rewrites the sum without reweighting,
\be
\avg{\Wd}
  =\frac{1}{Z_{d}^{(i)}}\sum_{f}(1-bE_{f}^{(\lambda_{f})})^{n}
  =\frac{Z_{d}^{(f)}}{Z_{d}^{(i)}}.
\label{eq:proof_step2}
\ee
The final equality follows from Eq.~\eqref{eq:Fd}.
\end{proof}

Equation~\eqref{eq:discJarz} is the exact identity that survives the
loss of multiplicative exponential structure. Instead of averaging
$e^{-\beta W}$, one averages a ratio of discrete weights between
trajectory endpoints: the natural object measures how the statistical
weight assigned to a trajectory changes under the protocol when
inverse temperature lives on a lattice.

The deterministic measure-preserving assumption is the discrete
analogue of the setting in which the ordinary Jarzynski equality
admits its simplest proof. We sketch a stochastic generalization
below.

\paragraph{First-order expansion.}
Expanding $\ln\Wd$ to first order in $b$,
\be
\ln\Wd
  =-\beta W-\frac{b\beta}{2}\bigl(2E_{i}W+W^{2}\bigr)
   +\mathcal{O}(b^{2}).
\label{eq:Wd_expand}
\ee
The extra term depends separately on the trajectory-specific initial
energy $E_{i}$, not only on the work $W$. Consequently, it is
\emph{not} legitimate to replace the exact ratio $\Wd$ by the simpler
$(1-bW)^{n}$: the latter differs already at order $b$. In particular,
$\avg{(1-bW)^{n}}=(1-b\,\Delta F)^{n}+\mathcal{O}(b^{2})$ is
\emph{not} valid in general; the discrepancy is of order $b$, not
$b^{2}$. Only the full ratio $\Wd$ satisfies the exact identity. This
is structural: because $(1-bE)^{n}$ lacks the multiplicative property
of $e^{-\beta E}$, work alone is no longer sufficient to describe the
deformation---one must keep the trajectory endpoints explicitly.

\subsection{Stochastic generalization (sketch)}
\label{subsec:stoch}

Theorem~\ref{thm:jarz} extends naturally to stochastic dynamics
satisfying a discrete form of local detailed balance. Let
$T_{i\to f}^{(\lambda)}$ be the transition kernel under instantaneous
protocol value $\lambda$, with stationary measure
$p_{i}^{(d,\lambda)}=(1-bE_{i}^{(\lambda)})^{n}/Z_{d}^{(\lambda)}$.
Detailed balance with respect to the discrete weight reads
\be
\frac{T_{i\to f}^{(\lambda)}}{T_{f\to i}^{(\lambda)}}
   =\left(\frac{1-bE_{f}^{(\lambda)}}{1-bE_{i}^{(\lambda)}}\right)^{\!n}.
\label{eq:DBd}
\ee
For a discrete-time protocol $\lambda_{0}\to\lambda_{1}\to\cdots\to\lambda_{N}$
generating a stochastic trajectory $\gamma=(i_{0},i_{1},\dots,i_{N})$,
the natural multi-step work functional is the telescoping product
\be
\Wd[\gamma]
   =\prod_{k=0}^{N-1}
    \left(\frac{1-bE_{i_{k+1}}^{(\lambda_{k+1})}}
               {1-bE_{i_{k+1}}^{(\lambda_{k})}}\right)^{\!n},
\label{eq:Wd_stoch}
\ee
which for a deterministic bijection collapses to Eq.~\eqref{eq:Wd}.
A standard trajectory-summation argument, using
Eq.~\eqref{eq:DBd} at each protocol step, yields
$\avg{\Wd}=Z_{d}^{(\lambda_{N})}/Z_{d}^{(\lambda_{0})}=(1-b\,\Delta\Fd)^{n}$,
recovering Theorem~\ref{thm:jarz}. We do not develop the full
construction here; the point is that the algebraic structure required
is precisely Eq.~\eqref{eq:DBd}, with no further use of exponential
factorization.

\subsection{Discrete Crooks ratio: the work-only obstruction}
\label{subsec:Crooks}

For forward ($F$) and time-reversed ($R$) protocols, the path-weight
ratio is
\be
\frac{P_{F}[\gamma]}{P_{R}[\bar\gamma]}
   =\frac{(1-bE_{i})^{n}}{(1-bE_{f})^{n}}
    \cdot\frac{Z_{d}^{(f)}}{Z_{d}^{(i)}}.
\label{eq:Crooks_path}
\ee
In the exponential case,
$e^{-\beta E_{i}}/e^{-\beta E_{f}}=e^{\beta W}$ depends only on $W$.
The discrete ratio, by contrast, depends on $E_{i}$ separately.

\begin{proposition}[No work-only Crooks at finite $b$]
\label{prop:nowork}
For any $b>0$, the path-weight ratio $P_{F}[\gamma]/P_{R}[\bar\gamma]$
in Eq.~\eqref{eq:Crooks_path} is \emph{not} a function of $W[\gamma]$
alone. Equivalently, there exist pairs of trajectories $\gamma_{1}$,
$\gamma_{2}$ with $W[\gamma_{1}]=W[\gamma_{2}]$ but
$P_{F}[\gamma_{1}]/P_{R}[\bar\gamma_{1}]
 \neq P_{F}[\gamma_{2}]/P_{R}[\bar\gamma_{2}]$.
\end{proposition}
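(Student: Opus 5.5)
The plan is to exhibit an explicit counterexample, since the statement is existential in its equivalent form. The ratio in Eq.~\eqref{eq:Crooks_path} factorizes into a trajectory-independent constant $Z_{d}^{(f)}/Z_{d}^{(i)}$ and the trajectory-dependent factor
\be
R(E_{i},W)\equiv\left(\frac{1-bE_{i}}{1-bE_{i}-bW}\right)^{\!n},
\ee
where we write $E_{f}=E_{i}+W$. It therefore suffices to show that, for fixed $W\neq 0$ and fixed $n\ge 1$, the map $E_{i}\mapsto R(E_{i},W)$ is not constant on the admissible domain $0\le E_{i}$, $E_{i}+W<1/b$.

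The key step is strict monotonicity of the base. Write $u=1-bE_{i}>0$. The base is $g(u)=u/(u-bW)$, and
\be
g'(u)=\frac{-bW}{(u-bW)^{2}}\neq 0 \quad\text{for } W\neq 0 .
\ee
So $g$ is strictly monotone in $u$ on the admissible interval. Since $g>0$ and $x\mapsto x^{n}$ is strictly increasing on $(0,\infty)$, $R$ is strictly monotone in $E_{i}$. Hence any two distinct admissible initial energies $E_{i}^{(1)}\neq E_{i}^{(2)}$ with the same $W$ give different ratios.

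It remains to realize two such trajectories in a concrete protocol satisfying the hypotheses of Theorem~\ref{thm:jarz}. I would take a finite state space with energies $0$ and $\epsilon$ under $\lambda_{i}$, and $w$ and $\epsilon+w$ under $\lambda_{f}$, with $0<w$ and $\epsilon+w<1/b$. The protocol is the order-preserving bijection $\Phi$, and $\gamma_{1},\gamma_{2}$ are the trajectories starting from the two states. Both have $W=w$, but their ratios differ by the monotonicity above.

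The only delicate point is the admissibility and dimension bookkeeping: $W$ must be nonzero and all energies must stay inside the compact support $E<1/b$, which the explicit choice guarantees. If one also wants a $W=0$ pair, that case gives a trivially constant ratio, so the nonzero-$W$ choice is essential. A cross-check against Eq.~\eqref{eq:Wd_expand} gives $\partial_{E_{i}}\ln R=b\beta W+\mathcal{O}(b^{2})\neq 0$, consistent with the exact argument.
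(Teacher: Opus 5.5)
Your argument is correct, and it takes a different route from the paper's.

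\textbf{The paper's route.} The proof is perturbative. It expands $\ln(P_{F}/P_{R})$ to first order in $b$, arriving at Eq.~\eqref{eq:discCrooks}, and notes that the term $b\beta E_{i}W$ makes the ratio depend on $E_{i}$ at fixed $W\neq 0$.

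\textbf{Your route.} You work with the exact ratio. After splitting off the trajectory-independent factor $Z_{d}^{(f)}/Z_{d}^{(i)}$, injectivity of $u\mapsto u/(u-bW)$ for $W\neq 0$ shows that two admissible initial energies with the same work always give different ratios, for every $b>0$ and every $n\ge 1$. Your rigidly shifted two-level spectrum then realizes such a pair inside a single bijective protocol.

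\textbf{What your version buys.} It makes explicit three things the paper's proof leaves implicit:
\begin{itemize}
\item Genuine finite-$b$ validity. A first-order expansion only guarantees the $E_{i}$-dependence for $b$ small enough that the $\mathcal{O}(b^{2})$ remainder cannot cancel the linear term, whereas the proposition says ``for any $b>0$''.
\item The tacit hypotheses $n\ge 1$ and $W\neq 0$. At $n=0$ the ratio is trajectory-independent.
\item An explicit equal-work pair within one protocol, where the $Z_{d}^{(f)}/Z_{d}^{(i)}$ factor cancels exactly. The paper's two-level illustration instead varies $\varepsilon_{i}$ at fixed $W_{2}$ across different protocols, which also shifts $\Delta\Fd$.
\end{itemize}

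\textbf{What the paper's version buys.} The expansion gives the explicit leading correction $b\beta E_{i}W+\tfrac{b\beta}{2}(W^{2}-\Delta F^{2})$. This is what feeds the Jensen inequality~\eqref{eq:2ndlaw} and the deviation scale~\eqref{eq:scale}. Your closing check $\partial_{E_{i}}\ln R=b\beta W+\mathcal{O}(b^{2})$ reconnects the two approaches.
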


\begin{proof}
Expanding the logarithm of Eq.~\eqref{eq:Crooks_path} to first order
in $b$ gives
\be
\ln\!\frac{P_{F}[\gamma]}{P_{R}[\bar\gamma]}
   =\beta(W-\Delta F)
   +b\beta E_{i}W+\tfrac{b\beta}{2}(W^{2}-\Delta F^{2})
   +\mathcal{O}(b^{2}),
\label{eq:discCrooks}
\ee
in which $E_{i}$ refers to the initial energy of the specific
trajectory~$\gamma$. At fixed $W\neq 0$, varying $E_{i}$ changes the
right-hand side at order $b$, which proves the claim.
\end{proof}

If one nevertheless wishes to compare with the standard Crooks
formulation, the observable distributions are
\be
P_{F}(W)=\sum_{\gamma:W[\gamma]=W}P_{F}[\gamma],
\label{eq:PFW}
\ee
and similarly for $P_{R}(-W)$. The ratio $P_{F}(W)/P_{R}(-W)$ involves
a conditional average of $(1+b\beta E_{i}W)$ over trajectories
producing work $W$, and reduces to a clean analytic expression only
under additional assumptions about the joint distribution of
$(E_{i},W)$.

\paragraph{Worked example: two-level system.}
Consider a single two-level system with energies
$\{0,\varepsilon(\lambda)\}$, whose gap is driven from
$\varepsilon_{i}=\varepsilon(\lambda_{i})$ to
$\varepsilon_{f}=\varepsilon(\lambda_{f})$ by a deterministic
bijection that maps each level of the initial Hamiltonian to the
corresponding level of the final one. The two trajectories have work
values $W_{1}=0$ (ground branch) and $W_{2}=\varepsilon_{f}-\varepsilon_{i}$
(excited branch). The discrete partition functions are
$Z_{d}^{(i)}=1+(1-b\varepsilon_{i})^{n}$ and
$Z_{d}^{(f)}=1+(1-b\varepsilon_{f})^{n}$, and Theorem~\ref{thm:jarz}
is verified by direct evaluation. For the excited branch,
\be
\frac{P_{F}[\gamma_{2}]}{P_{R}[\bar\gamma_{2}]}
   =\left(\frac{1-b\varepsilon_{i}}{1-b\varepsilon_{f}}\right)^{\!n}
    \frac{Z_{d}^{(f)}}{Z_{d}^{(i)}},
\label{eq:toy_ratio}
\ee
whose continuum limit is $e^{\beta(W_{2}-\Delta F)}$ as expected.
Expanding the logarithm to first order in $b$ and using $\beta=nb$,
\be
\begin{split}
\ln\!\frac{P_{F}[\gamma_{2}]}{P_{R}[\bar\gamma_{2}]}
  &=\beta(W_{2}-\Delta\Fd)\\
  &\quad+\frac{b\beta}{2}(2\varepsilon_{i}W_{2}+W_{2}^{2}-\Delta\Fd^{2})
   +\mathcal{O}(b^{2}),
\end{split}
\label{eq:toy_log}
\ee
matching Eq.~\eqref{eq:discCrooks} with $E_{i}=\varepsilon_{i}$. The
correction $b\beta\varepsilon_{i}W_{2}$ is the clean manifestation of
the work-only obstruction: keeping $W_{2}$ fixed and varying
$\varepsilon_{i}$ changes the right-hand side at order $b$. Direct
numerical evaluation of Eqs.~\eqref{eq:toy_ratio} and~\eqref{eq:toy_log}
for $b=10^{-2}$, $n=20$, $\varepsilon_{i}\in[0.5,10]$, $W_{2}=1$
confirms a linear deviation with slope $b\beta\varepsilon_{i}W_{2}$
and residuals of order $b^{2}$.

\begin{figure*}[t]
\centering
\includegraphics[width=0.95\textwidth]{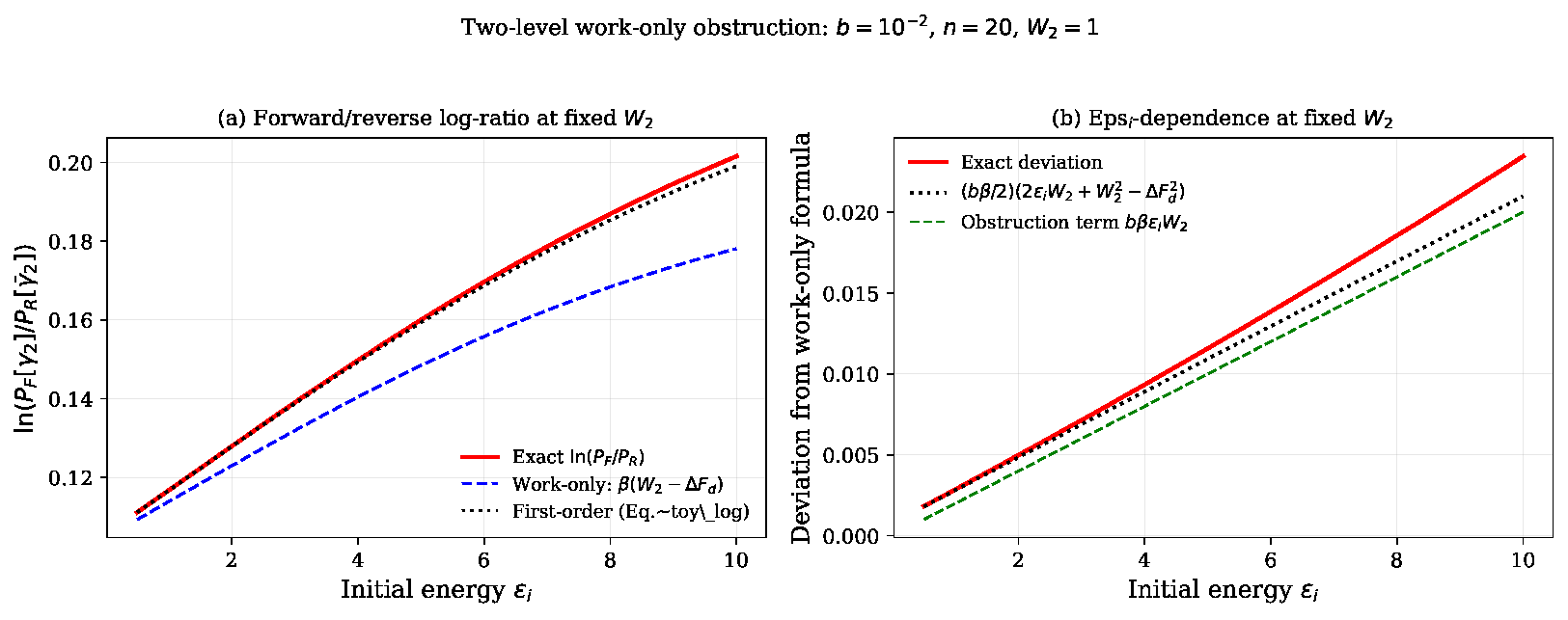}
\caption{Numerical demonstration of the work-only obstruction in the
two-level system, for $b=10^{-2}$, $n=20$, $W_{2}=1$.
\textbf{(a)}~The exact log path-weight ratio in
Eq.~\eqref{eq:toy_ratio} (red), the work-only continuum prediction
$\beta(W_{2}-\Delta\Fd)$ (blue dashed), and the first-order discrete
expression of Eq.~\eqref{eq:toy_log} (black dotted), all plotted
against the initial energy $\varepsilon_{i}$ at fixed $W_{2}$. The
exact and first-order curves agree to high accuracy throughout, while
the work-only term diverges from them at large $\varepsilon_{i}$.
\textbf{(b)}~Deviation from the work-only formula, isolating the
obstruction. The exact deviation (red) grows linearly with
$\varepsilon_{i}$ at slope $b\beta W_{2}=2\times 10^{-3}$, in
quantitative agreement with the leading obstruction term
$b\beta\varepsilon_{i}W_{2}$ (green dashed) and with the full
first-order prediction (black dotted). The figure makes
Proposition~\ref{prop:nowork} concrete: at finite $b$, two
trajectories with the same work but different initial energy do not
have the same forward/reverse path-weight ratio.}
\label{fig:crooks}
\end{figure*}

\subsection{First-order Jensen inequality}
\label{subsec:2ndlaw}

A valid finite-$b$ inequality follows from applying Jensen's
inequality to the exact identity Eq.~\eqref{eq:discJarz}. Since
$-\ln x$ is convex for $x>0$,
\be
\avg{-\ln\Wd}\ge -\ln\avg{\Wd}=-n\ln(1-b\,\Delta F).
\label{eq:jensen_exact}
\ee
Using the expansion Eq.~\eqref{eq:Wd_expand}, the left-hand side reads
\be
\avg{-\ln\Wd}
  =\beta\avg{W}+b\beta\avg{E_{i}W}
   +\tfrac{b\beta}{2}\avg{W^{2}}+\mathcal{O}(b^{2}),
\label{eq:LHS_jensen}
\ee
while the right-hand side reads
\be
-n\ln(1-b\,\Delta F)
   =\beta\Delta F+\tfrac{b\beta}{2}\Delta F^{2}+\mathcal{O}(b^{2}).
\label{eq:RHS_jensen}
\ee
Combining and rearranging yields the first-order-in-$b$ inequality
\be
\avg{W}\ge\Delta F-b\avg{E_{i}W}
   -\tfrac{b}{2}\bigl(\avg{W^{2}}-\Delta F^{2}\bigr)
   +\mathcal{O}(b^{2}),
\label{eq:2ndlaw}
\ee
valid under the same assumptions as Theorem~\ref{thm:jarz}. Three
points are noteworthy. The zeroth-order content is the standard
$\avg{W}\ge\Delta F$. The $\mathcal{O}(b)$ correction depends on the
mixed moment $\avg{E_{i}W}$ and the second moment $\avg{W^{2}}$, so
it does not reduce to a function of the variance or skewness of $W$
alone. Finally, unlike a cumulant-truncation ``second-law correction''
of the form $\avg{W}\ge\Delta F+(\beta/2)\Var(W)$, which is valid only
in the Gaussian limit, Eq.~\eqref{eq:2ndlaw} follows directly from
Jensen and from the exact identity Eq.~\eqref{eq:discJarz} without
cumulant truncation.

The mixed moment $\avg{E_{i}W}$ is not an artifact of the derivation
but a structural feature of the discrete framework, consistent with
Proposition~\ref{prop:nowork}.

\subsection{Observable consequences and experimental regimes}
\label{subsec:experiment}

The first-order correction in Eq.~\eqref{eq:discCrooks} suggests that
the natural scale of the finite-$b$ deviation from the standard
Crooks relation is
\be
\bigl|\delta\ln(P_{F}/P_{R})\bigr|\sim b\beta\avg{E_{i}W}.
\label{eq:scale}
\ee
The interpretation depends on the meaning of $b$.

\paragraph{Universal Planck-suppressed parameter.}
For $b\sim 10^{-28}\,\mathrm{eV}^{-1}$ (Planck-suppressed) and
$\avg{W}\sim\avg{E}\sim\kB T$ at room temperature in single-molecule
experiments~\cite{liphardt2002}, the order-of-magnitude estimate of
the fractional deviation is $b\beta\avg{E_{i}W}\sim 10^{-27}$, far
below experimental resolution. Fluctuation-theorem experiments are
then conceptually important but not competitive as direct probes of
fundamental quantum gravity.

\paragraph{Analogue or effective realization.}
Bounded weights may also arise in engineered or coarse-grained systems
with an emergent $b_{\mathrm{eff}}$ unrelated to the universal Planck
scale~\cite{collin2005}. Much larger apparent shifts could then be
realized, and the discrete Jarzynski framework would become testable
as an effective statistical theory. Because the exact relation
retains explicit $E_{i}$ dependence, a quantitative comparison with
data would require either a protocol controlling the $E_{i}$--$W$
correlation, or a direct measurement of $\Wd$ rather than only the
work distribution.

\paragraph{Estimator bias.}
The first-order expansion of Eq.~\eqref{eq:discJarz} implies that the
modified Jarzynski estimator carries a systematic bias of order
\be
\bigl|\Delta F^{(d)}_{\mathrm{est}}-\Delta F^{(0)}_{\mathrm{est}}\bigr|
   \sim b\,\avg{W^{2}}+\mathcal{O}(b^{2}).
\label{eq:bias}
\ee
For Planck-suppressed $b$ this bias is negligible; in analogue
implementations with larger $b_{\mathrm{eff}}$ it may become relevant,
and its exact coefficient depends on the joint distribution of
$(E_{i},W)$.

\section{Summary and outlook}
\label{sec:summary}

We have analyzed a single structural feature of the discrete
Boltzmann factor $\BE(\beta_{n})=(1-bE)^{n}$: unlike the canonical
weight, it has compact support in energy. Once that point is kept at
the center of the discussion, the consequences fall into a clear
hierarchy.

\emph{Thermal sector.} The bounded weight suppresses occupation
numbers progressively before the cutoff and removes them entirely at
$E=1/b$. In the Hawking application, this depletes the spectrum in
its intermediate and ultraviolet parts, lowers the integrated
luminosity by the universal-sign correction
$-(900\zeta(5)/\pi^{4})b\kB\TBH$, and brings the thermal description to
a Planckian endpoint at $M_{\mathrm{cut}}=\mP$ for the Planck-scale
calibration $b=8\pi/\Ep$. Whether the associated terminal
configuration corresponds to an absolutely stable remnant, a
long-lived quasi-static phase, or a regime in which nonthermal
channels dominate is a question the thermal framework alone cannot
settle.

\emph{Nonequilibrium sector.} Defining the discrete work functional
as a ratio of thermal weights yields an exact Jarzynski-type identity
for deterministic measure-preserving protocols
(Theorem~\ref{thm:jarz}), which extends naturally to stochastic
dynamics satisfying a discrete local detailed balance
(Sec.~\ref{subsec:stoch}). The Crooks ratio cannot be a function of
work alone at finite $b$ (Proposition~\ref{prop:nowork}), and the
correct first-order Jensen inequality involves the mixed moment
$\avg{E_{i}W}$ rather than work cumulants
[Eq.~\eqref{eq:2ndlaw}]. The loss of exponential multiplicativity
forces trajectory endpoints to be tracked explicitly---a structural
consequence of the deformation that is coherent with the thermal
analysis.

\emph{Entropy sector.} A brick-wall computation produces a
logarithmic shift linear in $b$, but its normalization inherits the
regulator dependence of the scheme; it should be read as a
diagnostic, not as a universal prediction.

\emph{Experimental status.} For universal Planck-suppressed $b$,
direct corrections to fluctuation theorems are unobservable. Analogue
or engineered systems with emergent $b_{\mathrm{eff}}$ could test the
formal structure of the discrete nonequilibrium theory without
probing fundamental quantum gravity.

Several questions remain open.
(i)~A microscopic derivation of the entropy correction bypassing the
brick-wall regulator.
(ii)~A complete construction of the stochastic discrete Jarzynski
identity sketched in Sec.~\ref{subsec:stoch}, including dissipation
identities and a discrete fluctuation theorem for entropy
production.
(iii)~Extension of the two-level worked example to multi-level systems
and continuous driving, where the conditional distribution
$p(E_{i}\mid W)$ becomes nontrivial and a closed-form
distribution-level Crooks generalization may be possible.
(iv)~A systematic comparison between $b$ and the deformation scales
of generalized uncertainty principle
models~\cite{maggiore1993,kempf1995,scardigli1999} and other
quantum-gravity-inspired frameworks.
(v)~Translation of bounds on the extragalactic CMB
spectrum~\cite{fixsen1996} and the projected sensitivity of
next-generation spectrometers such as PIXIE~\cite{kogut2011} into
upper bounds on $b$, an avenue that is in principle open but
quantitatively challenging given the Planck suppression expected in
the fundamental interpretation.

\begin{acknowledgments}
The author thanks colleagues at the Laboratory of Physics at Guelma
for useful discussions and acknowledges the institutional support of
Echahid Cheikh Larbi Tebessi University.
\end{acknowledgments}

\appendix

\section{Comparison of the effective Bose--Einstein law with the exact
truncated partition sum}
\label{app:BE}

This appendix proves Lemma~\ref{lem:BEvalid} by comparing the
$\mathcal{O}(b)$ expansions of the exact and effective occupation
numbers, and explains why the deep-infrared limitation of the
effective law does not affect the Hawking luminosity integral.

\subsection{Exact $\mathcal{O}(b)$ correction}

Expanding the discrete single-state weight $(1-bk\varepsilon)^{n}$
around the exponential gives
\be
(1-bk\varepsilon)^{n}=e^{-\beta k\varepsilon}
   \Bigl[1-\tfrac{1}{2}b\beta\varepsilon^{2}k^{2}+\mathcal{O}(b^{2})\Bigr],
\label{eq:weight_exp}
\ee
with $\beta=nb$. The associated correction to the exact occupation
number is
\be
\Delta\avg{k}_{\mathrm{exact}}
  \equiv\avg{k}_{\mathrm{exact}}-\nP
  =-\tfrac{1}{2}b\beta\varepsilon^{2}
   \bigl[\avg{k^{3}}_{P}-\nP\avg{k^{2}}_{P}\bigr]
   +\mathcal{O}(b^{2}),
\label{eq:exact_Ob}
\ee
where averages are taken with the standard Planck distribution and
$\nP=1/(e^{\beta\varepsilon}-1)$. Using
$\avg{k^{2}}_{P}=\nP(2\nP+1)$ and
$\avg{k^{3}}_{P}=\nP(6\nP^{2}+6\nP+1)$, this reduces to
\be
\Delta\avg{k}_{\mathrm{exact}}
  =-\tfrac{1}{2}b\beta\varepsilon^{2}\,\nP(\nP+1)(4\nP+1)
   +\mathcal{O}(b^{2}).
\label{eq:exact_explicit}
\ee

\subsection{Effective $\mathcal{O}(b)$ correction}

Expanding Eq.~\eqref{eq:nBE} similarly gives
\be
\Delta\avg{k}_{\mathrm{eff}}
  =-\tfrac{1}{2}b\beta\varepsilon^{2}\,\nP(\nP+1)+\mathcal{O}(b^{2}),
\label{eq:eff_explicit}
\ee
since
$e^{\beta\varepsilon}/(e^{\beta\varepsilon}-1)^{2}=\nP(\nP+1)$. The
two expressions therefore differ by a factor $(4\nP+1)$:
\begin{itemize}
\item In the Wien regime $\beta\varepsilon\gg 1$, $\nP\ll 1$ and
$(4\nP+1)\to 1$: the effective law reproduces the exact correction.
\item In the Rayleigh--Jeans regime $\beta\varepsilon\ll 1$,
$(4\nP+1)\sim 4/(\beta\varepsilon)$ grows: the effective law
underestimates the suppression.
\end{itemize}
This proves the second clause of Lemma~\ref{lem:BEvalid}; the first
clause follows from explicit numerical comparison of
Eqs.~\eqref{eq:nBE} and~\eqref{eq:ZBE_exact}, summarized in
Fig.~\ref{fig:bevsexact}(a).

\subsection{Weight in the luminosity integral}

Figure~\ref{fig:bevsexact}(b) shows the standard Planck integrand
$x^{3}/(e^{x}-1)$ on the same horizontal axis. Quantitative
integration shows that $96.5\%$ of the integral is accumulated in
$x\ge 1$ and $99.9\%$ in $x\ge 0.3$, so the infrared region where the
effective law fails contributes negligibly to the total Hawking flux.
This justifies using Eq.~\eqref{eq:nBE} inside the luminosity integral
of Sec.~\ref{subsec:Hawking}: the dominant contribution lives in
exactly the regime where the effective law is accurate.

\begin{figure*}[t]
\centering
\includegraphics[width=0.85\textwidth]{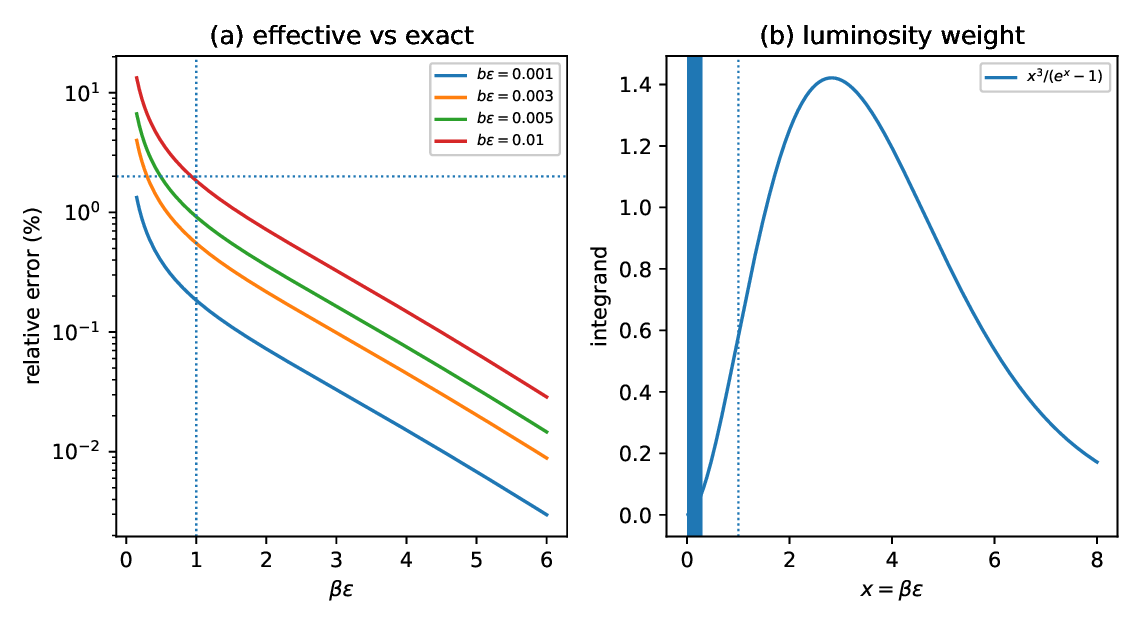}
\caption{(a)~Relative error of the effective discrete Bose--Einstein
law~\eqref{eq:nBE} with respect to the exact truncated partition
sum~\eqref{eq:ZBE_exact}, plotted against $\beta\varepsilon$ for
representative values of $b\varepsilon$. The vertical red dotted line
marks the $2\%$ level; the gray dotted line marks
$\beta\varepsilon=1$. (b)~Planck luminosity integrand $x^{3}/(e^{x}-1)$
on the same horizontal axis, with the shaded region
$\beta\varepsilon\le 0.3$ indicating where the effective law is
inaccurate. The integrand is strongly suppressed in this region, which
is why the luminosity integral in Sec.~\ref{subsec:Hawking} is
controlled by the effective law despite its limitations in the deep
infrared.}
\label{fig:bevsexact}
\end{figure*}

\section{Sketch of the brick-wall entropy estimate}
\label{app:entropy}

This appendix records the assumptions and intermediate steps behind
Eq.~\eqref{eq:SBH}. The aim is not to claim a microscopic derivation
but to make explicit where the linear-in-$b$ logarithmic shift
originates.

\subsection{Discrete brick-wall free energy}

Starting from the bosonic free-energy expression in a black-hole
background, we replace the standard thermal weight by the discrete
one, obtaining schematically
\be
\Fd\approx\frac{1}{\bH}\int_{0}^{1/(b\hbar)}\!\dd\omega\,g(\omega)\,
   \ln\!\bigl[1-(1-b\hbar\omega)^{n}\bigr],
\label{eq:Fd_bw}
\ee
where $g(\omega)$ is the brick-wall density of states and the upper
limit implements the compact support of the discrete factor. For
$\bH=nb$ and $b\hbar\omega\ll 1$,
\be
\begin{split}
\ln\!\bigl[1-(1-b\hbar\omega)^{n}\bigr]
   &=\ln(1-e^{-\bH\hbar\omega})\\
   &\quad+\frac{b\bH}{2}\,
          \frac{(\hbar\omega)^{2}}{e^{\bH\hbar\omega}-1}
   +\mathcal{O}(b^{2}),
\end{split}
\label{eq:logexp}
\ee
so $\Fd=F_{0}+\delta F_{b}+\mathcal{O}(b^{2})$ with
\be
\delta F_{b}\propto b\int\!\dd\omega\,g(\omega)\,
  \frac{(\hbar\omega)^{2}}{e^{\bH\hbar\omega}-1}.
\label{eq:dFb}
\ee

\subsection{Density of states and entropy operation}

The near-horizon density of states contains an area-law term and
subleading pieces:
\be
g(\omega)=c_{3}\,A\,\omega^{2}+c_{1}\,\omega^{0}+\cdots,
\label{eq:gomega}
\ee
where $c_{3}$ and $c_{1}$ depend on the radial cutoff and on the
background geometry. The $\omega^{2}$ piece renormalizes the area
term, while the $\omega^{0}$ piece yields a logarithmic contribution
under the standard entropy operation
\be
S=\bH^{2}\frac{\partial F}{\partial\bH}.
\label{eq:SfromF}
\ee
Using the standard integrals
\be
\int_{0}^{\infty}\!\!\dd x\,\frac{x^{3}}{e^{x}-1}=\frac{\pi^{4}}{15},
\quad
\int_{0}^{\infty}\!\!\dd x\,\frac{x}{e^{x}-1}=\frac{\pi^{2}}{6},
\label{eq:integrals_app}
\ee
the discreteness correction multiplies the same geometric structure
that produces the standard logarithm. With the conventional
brick-wall normalization adopted in Sec.~\ref{subsec:entropy},
\be
\delta\alpha_{b}^{(\mathrm{bw})}=\frac{b\Ep}{8\pi}.
\label{eq:alphabw_app}
\ee
Because $c_{3}$ and $c_{1}$ are regulator-sensitive and
Eq.~\eqref{eq:logexp} is only a first-order deformation of a
regulated calculation, this should be read as a representative
estimate of a linear-in-$b$ shift, not as a universal prediction.


\end{document}